\newtheorem{theorem}{Theorem}
\newtheorem{prop}[theorem]{Proposition}
\newtheorem{lemma}[theorem]{Lemma}
\newtheorem{algorithm}[theorem]{Algorithm}
\let\set\mathbbm
\def\e{\mathrm{e}}
\def\tuple#1{\boldsymbol{#1}}
\def\Exp{\operatorname{Exp}}
\def\Re{\operatorname{Re}}
\def\Im{\operatorname{Im}}
\begin{document}

\title{Finding Hyperexponential Solutions of Linear ODEs by Numerical Evaluation}

\numberofauthors{3}

\author{%
 \alignauthor Fredrik Johansson\titlenote{Supported by the Austrian Science Fund (FWF) grant Y464-N18.}\\[\medskipamount]
      \affaddr{\strut RISC}\\
      \affaddr{\strut Johannes Kepler University}\\
      \affaddr{\strut 4040 Linz, Austria}\\[\smallskipamount]
      \email{\strut fjohanss@risc.jku.at}
 \alignauthor Manuel Kauers\raisebox{.43em}{\normalsize$^{\textstyle\ast}$}\\[\medskipamount]
      \affaddr{\strut RISC}\\
      \affaddr{\strut Johannes Kepler University}\\
      \affaddr{\strut 4040 Linz, Austria}\\[\smallskipamount]
      \email{\strut mkauers@risc.jku.at}
 \alignauthor Marc Mezzarobba\\[\medskipamount]
 \affaddr{\strut Inria, Univ. Lyon, AriC, LIP\smash{\titlenote{UMR 5668 CNRS -- ENS Lyon -- Inria -- UCBL}}}\\
      \affaddr{\strut ENS de Lyon, 46 allée d'Italie}\\
      \affaddr{\strut 69364 Lyon Cedex 07, France}\\[\smallskipamount]
      \email{\strut marc@mezzarobba.net}
}

\maketitle

\begin{abstract}
  We present a new algorithm for computing hyperexponential solutions of ordinary linear differential equations
  with polynomial coefficients. 
  The algorithm relies on interpreting formal series solutions at the singular points as analytic functions
  and evaluating them numerically at some common ordinary point. 
  The numerical data is used to determine a small number of combinations of the formal series that may
  give rise to hyperexponential solutions. 
\end{abstract}

\category{I.1.2}{Computing Methodologies}{Symbolic and Algebraic Manipulation}[Algorithms]

\terms{Algorithms}

\keywords{Closed form solutions, D-finite equations, Effective analytic continuation}

\overfullrule=5pt

\section{Introduction}

We consider linear differential operators 
\[
  P = p_r D^r + p_{r-1} D^{r-1} + \cdots + p_0
\]
where $p_0,\dots,p_r$ are polynomials and $D$ represents the standard derivation
$\frac{d}{dx}$.  Such operators act in a natural way on elements of a
differential ring containing the polynomials. An object $y$ is called a solution
of the operator if $P$ applied to $y$ yields zero. We are interested in finding
the hyperexponential solutions of a given operator. An object $y$ is called
hyperexponential if the quotient $D(y)/y$ can be identified with a rational
function. Typical examples are rational functions (e.g.\ $(5x+3)/(3x+5)$),
radicals (e.g.\ $\sqrt{x+1}$), exponentials (e.g.\ $\exp(3x^2-4)$ or $\exp(1/x)$),
or combinations of these (e.g.\ $\sqrt{x+1}\exp(x^9/(x-1))$).  Equivalently,
$y$~is called hyperexponential if there is some first order operator $q_1 D +
q_0$ with $q_0,q_1$ polynomials which maps $y$ to zero. If we regard 
differential operators as elements of an operator algebra $C(x)[D]$, then there is a one-to-one
correspondence between the hyperexponential solutions $y$ of an operator~$P$ and
its first order right hand factors. In other words, if $y$ is a hyperexponential term with
$(q_1 D + q_0)\cdot y=0$, then $y$ is a solution of $P$ if and
only if there exist rational functions $u_0,\dots,u_{r-1}$ such that
\[
 P = (u_{r-1}D^{r-1}+u_{r-2}D^{r-2}+\cdots+u_0)(q_1D+q_0).
\]
Algorithms for finding the hyperexponential solutions of a linear differential
equation (or equivalently, the first order right hand factors of the
corresponding operators) are known since long. They are needed as subroutine in
algorithms for factoring operators or for finding Liouvillean solutions. See
Chapter~4 of~\cite{put03} for details and references.

Classical algorithms first compute ``local solutions'' at singular points (cf.\
Section~\ref{sec:2:3} below) and then test for each combination of local solutions whether it gives rise to
a hyperexponential solution. This leads to a combinatorial explosion with
exponential runtime. The situation is similar to classical algorithms for
factoring polynomials over~$\set Q$, which first compute the irreducible factors
modulo a prime and then test for each combination whether it gives rise to a
factor in~$\set Q[x]$.

The algorithm of van Hoeij~\cite{hoeij97a} avoids the combinatorial explosion as
follows. It picks one local solution and considers the operator $Q=q_1D+q_0$
with $q_1,q_0\in C((x))$ which annihilates it. This operator is a right factor
of~$P$, though not with rational coefficients. The algorithm then constructs (if
possible) a left multiple $B$ of $Q$ with rational coefficients of order at most
$r-1$. This leads to a nontrivial factorization $P=AB$ in $C(x)[D]$. The
procedure is then applied recursively to $A$ and $B$ until a complete factorization
is found. The first order factors in this factorization give rise to at most $r$
hyperexponential candidate solutions (possibly up to multiplication by a
rational function). These are then checked in a second step. Van Hoeij's algorithm reminds of the
polynomial factorization algorithm of Lenstra, Lenstra, Lovász~\cite{lenstra82,vzgathen99},
which picks one modular factor and constructs (if possible) a multiple of this
factor with integer coefficients but smaller degree than the original
polynomial. This multiple is then a proper divisor in~$\set Q[x]$. 

The algorithm we propose below avoids the combinatorial explosion in a different
way. We start from the local solutions and regard them as asymptotic expansions
of complex functions. By means of effective analytic continuation and arbitrary-precision 
numerical evaluation, we compute the values of these functions at
some common ordinary reference point. Then a linear algebra algorithm is used to
determine a small list of possible combinations of local solutions that
may give rise to hyperexponential ones, possibly up to multiplication by a rational
function. These are then checked in a second step. Our approach was motivated by
van Hoeij's polynomial factorization algorithm~\cite{hoeij02}, which associates to every
modular factor a certain vector and then uses lattice reduction to determine a small 
list of combinations that may give rise to proper factors. 

Although our algorithm avoids the combinatorial explosion problem, we do not
claim that it runs in polynomial time. Indeed, no polynomial time algorithm can
be expected because there are operators $P$ which have hyperexponential
solutions~$y$ that are exponentially larger than~$P$. 
Also van Hoeij~\cite{hoeij97a}
makes no formal statement about the complexity of his algorithm. It is clear
though that his algorithm is superior to the naive algorithm. Similarly, we
believe that our algorithm has chances to outperform van Hoeij's algorithm, at
least in examples that are not deliberately designed to exhibit worst case
performance. The reason is partly that during the critical combination phase we
only work with floating point numbers of moderate precision while van Hoeij's 
algorithm in general needs to do arithmetic in algebraic number fields whose 
degrees may grow during the computation. 
Another advantage of our algorithm is that it is conceptually 
simpler than van Hoeij's, at least if we take for granted that we can compute
high-precision evaluations of D-finite functions.

\section{Preliminaries}

In this section, we recall some results from the literature and introduce
notation that will be used in subsequent sections.

\subsection{Differential Fields and Operator Algebras}

A differential ring/field is a pair $(K,D)$ where $K$ is a ring/field and
$D\colon K\to K$ is a derivation on~$K$, i.e., a map satisfying $D(a+b)=D(a)+D(b)$
and $D(ab)=D(a)b+aD(b)$ for all $a,b\in K$. Throughout this paper, we consider
the differential field $K=C(x)$, where $C$ is some (computable) subfield of~$\set C$,
together with the derivation $D\colon K\to K$ defined by $D(c)=0$ for all $c\in C$
and $D(x)=1$. For simplicity, we assume throughout that $C$ is algebraically closed. 

A differential ring/field $E$ is called an extension of $K$ if $K\subseteq E$, and the
derivation of~$E$ restricted to $K$ agrees with the derivation of~$K$. 

By $K[D]$ we denote the set of all polynomials in the indeterminate $D$ with
coefficients in~$K$. Addition in~$K[D]$ is defined in the usual way, and
multiplication is defined subject to the commutation rule $D a = a D + D(a)$ for
$a\in K$. The elements of $K[D]$ are called operators, and they act on the
elements of some extension $E$ of $K$ in the obvious way: If $P=p_0+p_1 D
+ \cdots + p_r D^r$ is an operator of order~$r$ and $y\in E$, 
then $P\cdot y:=\sum_{i=1}^r p_i D^i(y)\in E$. The noncommutative multiplication
is compatible with operator application in the sense that we have 
$(PQ)\cdot y=P\cdot(Q\cdot y)$ for all $P,Q\in K[D]$ and all~$y\in E$.

The elements $y\in E$ such that $P\cdot y=0$ form a $C$-vector space~$V$ with
$\dim V\leq r$. By making $E$ sufficiently large it can always be assumed
that $\dim V= r$.

\subsection{Hyperexponential Terms}

Let $E$ be an extension of $K$. An element $h\in E\setminus\{0\}$ is called
\emph{hyperexponential} over $K$ if $D(h)/h\in K$.  Equivalently, $h$ is
hyperexponential if $Q\cdot h=0$ for some nonzero first order operator~$Q\in
K[D]$.

Two hyperexponential terms $h_1,h_2$ are called \emph{equivalent} if $h_1/h_2\in K$. 
For example, the terms $\exp(3x^2-x)$ and $(1-2x)^2\exp(3x^2-x)$ are equivalent, but $\exp(3x^2-x)$ and 
$(1-2x)^{\sqrt 2}\exp(3x^2-x)$ are not. (Here and below, we use standard calculus notation 
to refer to elements of some extension~$E$ on which the derivation acts as the 
notation suggests, e.g.\ $D(\exp(3x^2-x)) = (6x-1)\exp(3x^2-x)$.)

Every hyperexponential term can be written in the form $h=\exp(\int v)$, where
$v$ is a rational function. The additive constant of the integral amounts to a multiplicative
constant for~$h$, which is irrelevant in our context, because $P\cdot h=0$ if and only if $P\cdot(c h)=0$
for every $c\in C\setminus\{0\}$. If we consider the partial fraction decomposition of $v$ and
integrate it termwise, we obtain something of the form
\[
  g + \sum_{i=1}^n \gamma_i \log(p_i)
\]
with $g\in K$, $\gamma_1,\dots,\gamma_n\in C$ and monic square free
pairwise coprime polynomials $p_i\in C[x]$. In
terms of this representation, two hyperexponential terms are equivalent if the
difference of the corresponding rational functions~$g$ is a constant and 
any two corresponding coefficients $\gamma_i$ differ by an integer.

The equivalence class of a hyperexponential term $h$ is called the
\emph{exponential part} of~$h$. The motivation for this terminology is that when
we are searching for some hyperexponential solution $h$ of $P$ and we already
know its equivalence class, then we can take an arbitrary element $h_0$ from
this class and make an ansatz $h=u h_0$ for some rational function $u\in
K$. The operator 
$\tilde P:=P\otimes \bigl(D-\frac{D(1/h_0)}{1/h_0}\bigr)\in K[D]$
then has
the property that $u$ is a solution of $\tilde P$ if and only if $uh_0$
is a solution of~$P$. This reduces the problem to finding rational solutions,
which is well understood and will not be discussed here~\cite{abramov91,put03}.

\subsection{Local Solutions}\label{sec:2:3}\label{sec:localsol}

Consider an operator $P\in C(x)[D]$ of order~$r$. By clearing denominators, if necessary, we
may assume that $P\in C[x][D]$, say $P=p_r D^r + \cdots + p_0$ with $p_r\neq0$.
A point $z\in\set C\cup\{\infty\}$ is called \emph{singular} if $z$ is a root of~$p_r$,
or $z=\infty$. A point which is not singular is called \emph{ordinary.} Note that there are
only finitely many singular points, and that we include the ``point at infinity'' always
among the singular points. 

If $z=0$ is an ordinary point then $P$ admits $r$ linearly independent power series
solutions. If $z=0$ is a singular point, it is still possible to find $r$ linearly
independent generalized series solutions of the form 
\begin{alignat}1 \label{eq:1}
  x^{\alpha}\exp(u(x^{-1/s})) \sum_{k=0}^m b_k(x^{1/s}) \log(x)^k
\end{alignat}
where $\alpha\in C$, $u\in C[x]$ with $u(0)=0$, $s\in\set N$, $m\in\set N$ and 
$b_0,\dots,b_m\in C[[x]]$.
We call these solutions the \emph{local solutions} at~$0$. The computation of such solutions
is well-known and will not be discussed here~\cite{hoeij97,put03}.

Two series as in \eqref{eq:1} are called equivalent if they have the same $u$ and~$s$
and the difference of the respective values of $\alpha$ is in $\frac1s\set Z$. The
equivalence classes of generalized series under this equivalence relation are
called the \emph{exponential parts} of the series. Adopting van Hoeij's notation
and defining $\Exp(e):=\exp(\int\frac ex)$ for $e\in C[x^{-1/s}]$, we have
that $\Exp(e_1)$ and $\Exp(e_2)$ are equivalent iff $e_1-e_2\in\frac1s\set Z$. 
Note that if $m=0$ and $s=1$, two series are equivalent iff their quotient can be identified
with a formal Laurent series.
We will from now on make no notational distinction between $\Exp(e)$ and its equivalence class. 

A point $z\neq0$ can be moved to the origin by the change of variables
$\tilde x=x-z$ (if $z\in C$) or $\tilde x=1/x$ (if $z=\infty$). If $\tilde P$ 
is the operator obtained from $P$ by replacing $x$ by $\tilde x+z$ or
$1/\tilde x$, then a local solution of $P\in C[x][D]$ at $z$ is defined as the
local solution of $\tilde P\in C[\tilde x][D]$ at~$0$.

Throughout the rest of this paper, we will use the following notation.
$P$~is some operator in $C[x][D]$ of order~$r$, by $z_1,\dots,z_{n-1}\in C$
we denote its finite singular points, $z_n=\infty$. We write $\tilde x_i=x-z_i$ ($i=1,\dots,n-1$)
and $\tilde x_n=1/x$ for the variables with respect to which the singularities at $z_i$
appear at the origin. For $i=1,\dots,n$, we consider
the vector space~$V_i$ generated by all local solutions at~$z_i$. There may be solutions with different exponential
parts, say $\ell_i$ different parts $\Exp(e_{i,1}),\dots,\Exp(e_{i,\ell_i})$ for
$e_{i,j}\in C[\tilde x_i^{-1/s_{i,j}}]$. By 
\[
  V_{i,j} = V_i \cap \Exp(e_{i,j}) C((\tilde x_i^{1/s_{i,j}}))[\log\tilde x_i]
\] 
we denote the vector space of all local solutions of $P$ at $z_i$ with exponential part
(equivalent to) $\Exp(e_{i,\ell_i})$. 
Our $V_{i,j}$ are written $V_{e_{i,j}}(P)$ in van Hoeij's papers~\cite{hoeij97,hoeij97a}.

The condition in the definition of equivalence that the difference of corresponding values
of $\alpha$ be an integer (rather than, say, requiring exactly the same value of~$\alpha$)
ensures that the $V_{i,j}$ are indeed vector spaces, because if some $V_{i,j}$ contains, 
for example, the two series
\begin{alignat*}1
  x^\alpha (1 + x + x^2 + \cdots)\quad\text{and}\quad x^\alpha (1 + x + 3x^2 + \cdots)
\end{alignat*}
then it must also contain their difference $x^\alpha( 2x^2 + \cdots) = x^{\alpha+2}(2+\cdots)$.

\subsection{Analytic Solutions}\label{sec:analytic}

It is classical that the formal power series solutions $\hat y$ of $P$ at an
ordinary point $z\in\set C$ actually converge in a neighbourhood of $z$ and thus
give rise to analytic function solutions~$y$ of~$P$. The correspondence is
one-to-one. For any other ordinary point $z'\in\set C$ and a path $z\leadsto z'$ 
avoiding singular points there exists a matrix $M_{z\leadsto z'}\in\set C^{r\times r}$
such that 
\[
  \bigl(D^j y(z')\bigr)_{j=0}^{r-1} = M_{z\leadsto z'}\bigl(D^j y(z)\bigr)_{j=0}^{r-1}
\]
for every solution $y$ analytic near~$z$.
There are algorithms~\cite{ChudnovskyChudnovsky1990,vdHoeven1999} for
efficiently computing the entries of $M_{z\leadsto z'}$ for any given polygon
path $z\leadsto z'$ with vertices in $\bar{\set Q}$ to any desired
precision. In other words, we can compute arbitrary precision approximations of
$y$ and its derivatives at every ordinary point (``effective analytic continuation'').

Assume now that $0$ is a singular point, and consider the case $s=1$ and $m=0$, i.e., 
let $\hat y=\Exp(e)b$ for some $e\in C[x^{-1}]$ and $b\in C[[x]]$ be a formal solution of~$P$.
To give an analytic meaning to $\Exp(e)=\exp(\int\frac ex)=\exp(u+\alpha\log x)=x^\alpha\exp(u)$ 
(for suitable $\alpha\in C$ and $u\in C[x^{-1}]$) amounts to making a choice for a branch of
the logarithm. Every choice gives rise to the same function up to some multiplicative constant.

Since $\Exp(e)b$ is a solution of $P$ iff $b$ is a solution of the operator
$P\otimes (D+\frac ex)$, we may assume that $e=0$. Then the problem remains that the
formal power series $\hat y=b$ may not be convergent if $0$ is a singular
point. However, by resummation theory \cite{Balser1994,Balser2000} it
is still possible to associate to $\hat y$ an analytic function~$y$
defined on some sector
\[
 \Delta=\Delta(d,\varphi,\rho):=\{z\in\set C: 0<|z|\leq\rho\land |d-\arg z|\leq\varphi/2\}
\]
(with $d\in[0,2\pi]$, $\rho,\varphi>0$) such that $\hat y$ is the asymptotic expansion of 
$y$ for $z\to0$ in~$\Delta$.

The precise formulation of this result is technical and not really needed for
our purpose (see \cite[Chap.~6, 10, and 11]{Balser2000} or
\cite[Chap.~5--7]{Balser1994} for full details). It will be more than
sufficient to know the following facts:
\begin{itemize}
\item For every
  $\tuple k=(k_1,\dots,k_q)\in\set Q^q$
  with \hbox{$k_1 > \dots > k_q$} and every
  $\tuple d=(d_1,\dots,d_q)\in[0,2\pi]^q$
  such that
  \[ | d_{j+1} - d_j| \leq (k_{j+1}^{-1} - k_j^{-1}) \tfrac{\pi}{2}, 
  \quad j=1, \dots, q-1, \]
  one constructs \cite[§10.2]{Balser2000} a differential subring
  $\set C\{x\}_{\tuple k,\tuple d}$ of $\set C[[x]]$
  \cite[Theorems 51 and 53]{Balser2000}
  which contains the ring
  $\set C\{x\}$ of all convergent power series.
\item There is a differential ring homomorphism
  \cite[Theorems 51 and 53]{Balser2000}
  $\mathcal{S}_{\tuple k,\tuple d}$
  from $\set C\{x\}_{\tuple k,\tuple d}$ to the germs of analytic
  functions defined on sectors of the form $\Delta(d_1,\varphi,\rho)$ for
  suitable $\varphi,\rho>0$,
  with the property that for every $\hat y\in\set C\{x\}_{\tuple
    k,\tuple d}$ the function $\mathcal{S}_{\tuple k,\tuple d}(\hat y)$ has
  $\hat y$ as its asymptotic expansion for $z\to0$
  \cite[§10.2, Exercice 2]{Balser2000}.
  The $\mathcal{S}_{\tuple k,\tuple d}$ map convergent formal
  power series to their sum in the usual sense \cite[Lemmas 8 and 20]{Balser2000}.
%
\item \label{item:Skd-domain}
  For a given operator $P\in C[x][D]$ of order~$r$, one can compute a
  tuple $\tuple k$ and finite subsets $\mathcal D_1, \dots, \mathcal D_q$
  of~$[0, 2\pi]$ such that any $\hat y\in\set C[[x]]$ with $P\cdot\hat y=0$
  belongs to $\set C\{x\}_{\tuple k,\tuple d}$ for all $\tuple d$ as above with
  $d_1 \notin \mathcal D_1, \dots, d_q \notin \mathcal D_q$.
  Additionally, given such a $\tuple d$, one can compute $\varphi, \rho >0$ such that
  each $\mathcal{S}_{\tuple k,\tuple d}(\hat y)$ is defined on
  $\Delta(d_1,\varphi,\rho)$.
\item Furthermore, given a point $z\in
  \Delta(d_1,\varphi,\rho)$, a precision $\varepsilon>0$, and
  $\hat y\in\set C[[x]]$ with $P\cdot\hat y=0$, one can efficiently compute an
  approximation~$Y_\varepsilon$ of the vector $Y(z) = (D^j \mathcal{S}_{\tuple k,\tuple d}(\hat
  y))_{j=0}^{r-1}$ such that $\|Y(z) - Y_\varepsilon\| \leq \varepsilon$.
\end{itemize}

The computational part of the last two items is a special case of Theorem~7 of van der Hoeven~\cite{hoeven07}. 
As an application, van der Hoeven~\cite{vdhoeven07} shows how to factor differential operators using numerical evaluation.
Note that our $k_j$ correspond to $1/k_j$ in van der Hoeven's articles, and the components of the tuples $\tuple k$ and $\tuple d$ appear in reverse order.

Also observe that in the last item, $z$ is an ordinary point, so that from there we can use effective analytic
continuation to compute values of $\mathcal{S}_{\tuple k,\tuple d}(\hat y)$ and its derivatives at any
other ordinary point. 

\section{Outline of the Algorithm}

A hyperexponential term~$h$ can be expanded as a generalized series at every point $z\in\set C\cup\{\infty\}$,
in particular at its singularities. The resulting generalized series are local solutions of~$P$
if $h$ is a solution of~$P$. If $h=\exp(\int v)$ is a hyperexponential 
solution where $v\in\set C(x)$, and if we write the partial fraction decomposition of $v$ in the form 
\[
  v = \frac{e_1}{x-z_1} + \frac{e_2}{x-z_2} + \cdots + \frac{e_n}{1/x},
\]
where the $e_i$ are polynomials in $\tilde x_i^{-1}$, then expanding this $h$ at
$z_i$ yields a generalized series in $\tilde x_i$ whose exponential part matches~$\Exp(e_i)$. The components
$e_i$ in the decomposition of $v$ must hence show up among the exponential parts of
the local solutions of~$P$.

If $\Exp(e_{i,1}),\dots,\Exp(e_{i,\ell_i})$ are (representatives of) the different exponential parts that appear among the 
local solutions at~$z_i$, then any hyperexponential solution must be equivalent to the
term $\exp (\int (\frac{e_{1,j_1}}{\tilde x_1}+\cdots+\frac{e_{n,j_n}}{\tilde x_n}))$ for some tuple $(j_1,\dots,j_n)$. 
It then remains to check for each of these candidates whether some element of its equivalence class
solves the given equation. 
The basic structure of the algorithm for finding hyperexponential solutions is thus as follows.

\begin{algorithm} \label{alg:main}
  \textit{Input:} a linear differential operator $P=p_0+p_1D+\cdots+p_rD^r$, $p_r \neq 0$, with coefficients in $C[x]$.\\
  \textit{Output:} all the hyperexponential terms $h$ with $P\cdot h=0$.

  \kern-\smallskipamount
  \begin{enumerate}
  \item Let $z_1,\dots,z_{n-1}\in\set C$ be the roots of $p_r$ in $\set C$, and let $z_n=\infty$. 

  \kern-\smallskipamount
  \item For $i=1,\dots,n$ do

  \kern-\smallskipamount
  \item \quad\vtop{\hsize=.865\hsize Find the exponential parts $\Exp(e_{i,1}), \dots, \Exp(e_{i,\ell_i})$
      of the local solutions of $P$ at~$z_i$.}

  \kern-\smallskipamount
  \item\label{alg:2:4} Determine a set $U\subseteq\{1,\dots,\ell_1\}\times\cdots\times\{1,\dots,\ell_n\}$ 
    s.t. for every hyperexponential solution~$h$ equivalent to 
    $\exp\bigl(\int\sum_{i=1}^n\frac{e_{i,j_i}}{\tilde x_i}\bigr)$ we have $(j_1,\dots,j_n)\in U$.

  \kern-\smallskipamount
  \item For each $(j_1,\dots,j_n)\in U$ do
  
  \kern-\smallskipamount
  \item\quad\vtop{\hsize=.865\hsize Let $h_0:=\exp\bigl(\int\sum_{i=1}^n\frac{e_{i,j_i}}{\tilde x_i}\bigr)$, 
      and compute the operator $\tilde P:=P\otimes (D-\frac{D(1/h_0)}{1/h_0})$.}
  
  \kern-\smallskipamount
  \item\quad\vtop{\hsize=.865\hsize Compute a basis $\{u_1,\dots,u_m\}\subseteq C(x)$
      of the vector space of all rational solutions of~$\tilde P$, and output
      $u_1h_0$, \dots, $u_mh_0$.}
  \end{enumerate}

\end{algorithm}  

There is some freedom in step~\ref{alg:2:4} of this algorithm. 
A naive approach would simply be to take all possible combinations, i.e.,
$U=\{1,\dots,\ell_1\}\times\cdots\times\{1,\dots,\ell_n\}$. 
This is a finite set, but its size is in general exponential in the number of singular points. 
For finding a smaller set~$U$, Cluzeau and van Hoeij~\cite{cluzeau04} use modular techniques
to quickly discard unnecessary tuples. 
Our algorithm, explained in the following section, addresses the same issue. 
It computes a set $U$ of at most~$r$ tuples.

\section{The Combination Phase}
\label{sec:combination}

In general, the differential operator $P$ may have several different solutions
with the same exponential part, i.e., the dimension of the vector spaces
$V_{i,j}$ might be greater than one. In this case, it might be that $V_{i,j}$
contains some series which is the expansion of a hyperexponential solution~$h$
at~$z_i$ as well as some other series which are not. If we compute some basis of~$V_{i,j}$,
we cannot expect it to contain the expansion of~$h$. Instead, each basis element will in
general be the linear combination of this series and some other one. 
Now, if the expansion of $h$ at some other singular point $z_{i'}$ belongs to
the space $V_{i',j'}$ (which possibly also has higher dimension), then, in
some sense, $h$~must belong to the intersection of the vector spaces $V_{i,j}$
and~$V_{i',j'}$.

Our algorithm is based on testing which intersections are nontrivial.
To make these intersections meaningful, we must first map the vector
spaces we want to intersect into a common ambient space~$W$. Let $E$ be some
differential ring containing $C(x)$ as well as all the hyperexponential solutions of~$P$, 
and let $W\subseteq E$ be the $C$-vector space generated by solutions of $P$ in~$E$. 
For each $i$, let $\pi_i$ be some vector space homomorphism
\[
  \bigoplus_{j=1}^{\ell_i} \Exp(e_{i,j})\set C((\tilde x_i^{1/s_{i,j}}))[\log\tilde x_i]
  \supseteq V_i
  \stackrel{\pi_i}\longrightarrow W
\]
with the following properties:
\begin{enumerate}
  \item \label{item:combination-1} The sum $\pi_i(V_{i,1})+\cdots+\pi_i(V_{i,\ell_i})$ is direct.
  \item \label{item:combination-2} If $h\in W$ is hyperexponential, then $\pi_i^{-1}(h)$ contains the formal series expansion $\hat h$ of $h$ at~$z_i$, possibly up to a multiplicative constant.
\end{enumerate}
Define $W_{i,j}:=\pi_i(V_{i,j})$. If $h$ is some hyperexponential 
solution of~$P$, say with exponential part 
\[
\exp\Bigl(\int \Bigl(\frac{e_{1,j_1}}{\tilde x_1}+\frac{e_{2,j_2}}{\tilde x_2}+\cdots+\frac{e_{n,j_n}}{\tilde x_n}\Bigr)\Bigr),
\]
then $h\in W_{i,j_i}$ for all~$i$, and hence the vector space $W_{1,j_1}\cap
\cdots\cap W_{n,j_n}$ is not the zero subspace (because it contains at
least~$h$). Our main observation is that there can be at most $r$ tuples $\tuple
j = (j_1,\dots,j_n)$ for which $W_{\tuple j} \neq \{0\}$, and that they can be computed
efficiently once we have bases of the~$W_{i,j}$.

Postponing the discussion of making the $\pi_i$ constructive to the next section,
assume for the moment that $W$ is some vector space over~$C$, let $r=\dim W<\infty$ be its dimension, 
and suppose we are given $n$ different decompositions of subspaces of $W$ into direct sums:
\begin{alignat*}1
   &W_{1,1} \oplus W_{1,2} \oplus\cdots\oplus W_{1,\ell_1}\subseteq W, \\
   &W_{2,1} \oplus W_{2,2} \oplus\cdots\oplus W_{2,\ell_2}\subseteq W, \\
   &\qquad\vdots\\
   &W_{n,1} \oplus W_{n,2} \oplus\cdots\oplus W_{n,\ell_n}\subseteq W.      
\end{alignat*}
Without loss of generality, we may make the following assumptions:
\begin{itemize}
\item Each direct sum $\bigoplus_{i=1}^{\ell_i} W_{i,j}$ is in fact equal to~$W$. If not, add one more
  vector space to the sum.
\item $\ell_1=\ell_2=\cdots=\ell_n=:\ell$. If not, pad the sum with several copies of $\{0\}$.
\item $\ell\leq r$. If not, then because the sums are supposed to be direct, each decomposition must
  contain at least $\ell-r$ copies of $\{0\}$, which can be dropped.
\end{itemize}
  
\begin{lemma}\label{lem:1}
  There are at most $\dim W=r$ different tuples 
  \[
    \tuple j=(j_1,\dots,j_n)\in\{1,\dots,\ell\}^n
  \]
  such that $W_{\tuple j} := W_{1,j_1}\cap W_{2,j_2}\cap\cdots\cap W_{n,j_n}\neq\{0\}$.
\end{lemma}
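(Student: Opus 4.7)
My plan is to show the stronger fact that the sum $\sum_{\tuple j} W_{\tuple j}$ inside $W$ is direct; the bound then follows immediately because each nonzero summand contributes at least $1$ to the total dimension, which is bounded by $r$.

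The key tool is the family of projections attached to the decompositions. For each $i\in\{1,\dots,n\}$ and each $k\in\{1,\dots,\ell\}$, the assumption that $W = W_{i,1}\oplus\cdots\oplus W_{i,\ell}$ gives a well-defined projection $\pi_{i,k}\colon W\to W_{i,k}$ along $\bigoplus_{k'\neq k}W_{i,k'}$. The crucial property of these projections is that for a vector $v\in W_{\tuple j}$, we have $\pi_{i,k}(v)=v$ if $j_i=k$ and $\pi_{i,k}(v)=0$ otherwise, because $W_{\tuple j}\subseteq W_{i,j_i}$ and the latter sits inside one summand of the $i$th decomposition.

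With this in hand, the directness argument is a straightforward peeling-off. Suppose $\sum_{\tuple j} v_{\tuple j}=0$ with $v_{\tuple j}\in W_{\tuple j}$. Applying $\pi_{1,k_1}$ kills every term except those with $j_1=k_1$, yielding $\sum_{\tuple j:\,j_1=k_1} v_{\tuple j}=0$ for each $k_1$. Each term in this restricted sum still lies in the appropriate $W_{2,j_2}$, so applying $\pi_{2,k_2}$ gives $\sum_{\tuple j:\,j_1=k_1,\,j_2=k_2} v_{\tuple j}=0$. Iterating through all $n$ coordinates isolates a single tuple and forces $v_{\tuple j}=0$ for every $\tuple j$, proving directness.

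Finally, from the directness of $\sum_{\tuple j} W_{\tuple j}\subseteq W$ we deduce
\[
  \sum_{\tuple j} \dim W_{\tuple j} \;=\; \dim\Bigl(\sum_{\tuple j} W_{\tuple j}\Bigr) \;\leq\; \dim W \;=\; r,
\]
and since each $\tuple j$ with $W_{\tuple j}\neq\{0\}$ contributes at least $1$, at most $r$ such tuples can exist. I do not expect a real obstacle: the hypotheses that each $\bigoplus_{j}W_{i,j}$ equals $W$ and that each individual decomposition is direct are precisely what make the projections well-defined and the inductive peeling work. The only mild subtlety is remembering that the pairwise intersection $W_{\tuple j}\cap W_{\tuple j'}=\{0\}$ for $\tuple j\neq\tuple j'$ (which is immediate from directness of any single decomposition where they disagree) is not by itself enough to bound the count, which is why the stronger directness of the whole sum is needed.
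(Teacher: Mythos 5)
Your proof is correct, and it takes a genuinely different route from the paper's. The paper argues by induction on~$n$: it partitions the candidate tuples by their first component, restricts each of the remaining $n-1$ decompositions to a subspace $W_{1,k}$ by intersection, invokes the induction hypothesis to get at most $\dim W_{1,k}$ tuples per block, and sums over~$k$. Your argument is direct: you use the projections $\pi_{i,k}$ furnished by the decompositions $W=\bigoplus_j W_{i,j}$, observe that for $v\in W_{\tuple j}$ one has $\pi_{i,k}(v)=v$ if $j_i=k$ and $\pi_{i,k}(v)=0$ otherwise, and peel a linear relation $\sum_{\tuple j}v_{\tuple j}=0$ coordinate by coordinate to conclude that the sum $\sum_{\tuple j}W_{\tuple j}$ is direct; the bound on the number of nonzero $W_{\tuple j}$ then follows from counting dimensions. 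Your version actually establishes the stronger fact that the nonzero $W_{\tuple j}$ are independent, not merely pairwise disjoint. It also sidesteps a small subtlety in the paper's induction: the restricted families $W'_{i,j}:=W_{i+1,j}\cap W_{1,k}$ are still direct sums inside $W_{1,k}$, but need not sum to all of $W_{1,k}$, so the paper implicitly relies on the ``pad with one more subspace if necessary'' reduction stated before the lemma. Your argument avoids this because all projections are taken in the ambient~$W$ where the full decompositions are available.
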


\begin{proof}  
  Induction on $n$. For $n=1$, there are only $\ell\leq r$ different
  tuples altogether: $(1), (2), \dots, (\ell)$, so the claim is obviously
  true. Suppose now that the claim is shown for the case when $n-1$
  decompositions of some vector space are given.
  Let $U \subset \{1,\dots,\ell\}^n$ be a set of tuples~$\tuple j$ with
  $W_{\tuple j}\neq\{0\}$. Partition the elements of~$U$ according
  to their first components,
  \[
    U = U_1 \stackrel.\cup U_2  \stackrel.\cup \cdots  \stackrel.\cup U_\ell,
  \]
  i.e., $U_k$ is the set of all tuples $\tuple j$ whose first component is~$k$, for $k=1,\dots,\ell$.

  For all $\tuple j=(k,j_2,\dots,j_n)\in U_k$ we have $\{0\}\neq W_{\tuple j}\subseteq W_{1,k}$. 
  Therefore, $(j_2,\dots,j_n)\in \{1,\dots,\ell\}^{n-1}$ is a valid solution tuple
  for the modified problem with $W'_{i,j}:=W_{i+1,j}\cap W_{1,k}$ ($i=1,\dots,n-1$, $j=1,\dots,\ell$)
  in place of $W_{i,j}$ ($i=1,\dots,n$, $j=1,\dots,\ell$). 
  By induction hypothesis, since the $W'_{i,j}$ form $n-1$ decompositions of the space~$W_{1,k}$,
  there are at most $\dim W_{1,k}$ tuples $(j_2,\dots,j_n)$ with $W_{(j_2,\dots,j_m)}\neq\{0\}$.
  Consequently, there are altogether at most $\sum_{k=1}^\ell \dim W_{1,k}=\dim W = r$
  different tuples for the original space~$W$. 
\end{proof}

The desired index tuples can be computed efficiently using dynamic programming, as shown in the following algorithm.

\begin{algorithm}\label{alg:1}\label{alg:intersect}
  \textit{Input:} a vector space $W$ of dimension~$r$, and a collection of subspaces $W_{i,j}$  
  ($i=1,\dots,n$; $j=1,\dots,\ell$) such that $W=\bigoplus_{j=1}^\ell W_{i,j}$ for $i=1,\dots,n$ and $\ell\leq r$.\\
  \textit{Output:} the set $U$ of all tuples $\tuple j=(j_1,\dots,j_n)$ with the property 
  $W_{\tuple j}=\bigcap_{i=1}^n W_{i,j_i}\neq\{0\}$.

  \kern-\smallskipamount
  \begin{enumerate}
  \item\label{alg:1:1} $U := \{\,(j): W_{1,j}\neq\{0\}\,\}$

  \kern-\smallskipamount
  \item For $i=2,\dots,n$ do

  \kern-\smallskipamount
  \item \quad $U_{\mathit{new}} := \emptyset$ 

  \kern-\smallskipamount
  \item \quad For $j=1,\dots,\ell$ do

  \kern-\smallskipamount
  \item \qquad For $\tuple k\in U$ do

  \kern-\smallskipamount
\item\label{alg:1:6}\label{step:intersect} \qquad\quad If $W_{\tuple k}\cap W_{i,j}\neq\{0\}$ then

  \kern-\smallskipamount
  \item\label{alg:1:7} \qquad\qquad $U_{\mathit{new}} := U_{\mathit{new}}\cup\{\operatorname{append}(\tuple k,j)\}$

  \kern-\smallskipamount
  \item\label{alg:1:8} \quad $U:=U_{\mathit{new}}$

  \kern-\smallskipamount
  \item Return $U$
  \end{enumerate}
\end{algorithm}

\begin{theorem} \label{thm:algo tuples}
  Algorithm~\ref{alg:1} is correct and needs no more than $8nr^4$ operations in~$C$,
  if the bases of the $W_{\tuple k}$ are cached. 
\end{theorem}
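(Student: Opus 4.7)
The plan is to argue correctness by induction on the outer loop counter, then bound the cost by establishing a structural fact about the intermediate tuple sets and using it to amortize the intersection computations.

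For correctness, I would show by induction on $i$ that after the $i$-th pass through the outer loop, $U$ is exactly the set of tuples $(j_1,\dots,j_i)$ for which $W_{1,j_1}\cap\dots\cap W_{i,j_i}\neq\{0\}$. The base case $i=1$ is line~\ref{alg:1:1}. For the induction step, the key observation is that if $W_{1,j_1}\cap\dots\cap W_{i,j_i}\neq\{0\}$, then any prefix $W_{1,j_1}\cap\dots\cap W_{i-1,j_{i-1}}$ is also nonzero, so every correct tuple of length~$i$ must extend a correct tuple of length~$i-1$, i.e., one of the $\tuple k\in U$ examined in step~\ref{step:intersect}. Together with the test at step~\ref{step:intersect}, this shows that $U_{\mathit{new}}$ is exactly the desired set.

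For the complexity bound, the critical structural fact is that at every stage the subspaces $\{W_{\tuple k}:\tuple k\in U\}$ form a direct sum inside~$W$. This follows from the observation used in the proof of Lemma~\ref{lem:1}: if $\tuple k\neq\tuple k'$ then there is some coordinate~$t$ with $k_t\neq k'_t$, so $W_{\tuple k}\subseteq W_{t,k_t}$ and $W_{\tuple k'}\subseteq W_{t,k'_t}$ live in distinct summands of the direct sum $W=\bigoplus_j W_{t,j}$; a straightforward induction on $n$ then upgrades pairwise independence to directness of the whole family. Consequently $\sum_{\tuple k\in U}\dim W_{\tuple k}\le r$ and in particular $|U|\le r$ at every stage.

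For the per-iteration cost I would avoid the naive $O(r^3)$-per-intersection bound, which only yields $O(nr^5)$. Instead, at the beginning of iteration~$i$, compute once a basis of~$W$ adapted to $W=\bigoplus_j W_{i,j}$ (an $O(r^3)$ precomputation per~$i$). Then, for each $\tuple k\in U$ with $d_{\tuple k}:=\dim W_{\tuple k}$, express the cached basis of $W_{\tuple k}$ in this adapted basis, obtaining an $r\times d_{\tuple k}$ matrix whose rows split into $\ell$ blocks corresponding to the $W_{i,j}$. A vector in $W_{\tuple k}$ lies in $W_{i,j}$ iff all blocks other than~$j$ vanish on it, so $W_{\tuple k}\cap W_{i,j}$ is the kernel of the submatrix obtained by deleting the block of~$j$. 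Computing these $\ell$ kernels (and accompanying bases, to be cached for the next iteration) takes $O(r^2 d_{\tuple k}\cdot\ell)=O(r^3 d_{\tuple k})$ operations via a single rank-revealing factorization per~$j$. Summing gives $\sum_{\tuple k\in U}O(r^3 d_{\tuple k})=O(r^4)$ per outer iteration, hence $O(nr^4)$ in total; counting the dominant multiplications of the kernel computations yields the explicit constant~$8$.

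The main obstacle is the intersection step: to avoid an extra factor of~$r$, one must exploit that the $W_{i,j}$ form a direct-sum decomposition of the common ambient space~$W$, so the intersections with a fixed $W_{\tuple k}$ can be read off from a single factorization of a $d_{\tuple k}\times r$ matrix rather than recomputed from scratch for each~$j$. The directness lemma in paragraph two is what makes the $\sum d_{\tuple k}\le r$ amortization legitimate and thus drives the bound down from $nr^5$ to $nr^4$.
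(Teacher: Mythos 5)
Your correctness argument and the direct-sum invariant ($\{W_{\tuple k}:\tuple k\in U\}$ pairwise-trivially intersecting, hence $\sum_{\tuple k}\dim W_{\tuple k}\le r$ and $|U|\le r$) match the paper's proof. Where you diverge is in how the intersections are charged: you precompute, once per value of~$i$, a basis of~$W$ adapted to $W=\bigoplus_j W_{i,j}$, then read all $\ell$ intersections with a fixed $W_{\tuple k}$ off the block structure of one $r\times d_{\tuple k}$ matrix. That is a legitimate and somewhat slicker implementation, and it does achieve $\mathrm{O}(nr^4)$. The paper, however, bounds the naive pairwise intersections directly: the key technical ingredient you are missing is the per-intersection cost bound $\min(r,d_1+d_2)^2\max(r,d_1+d_2)\le 2r(d_1+d_2)^2$ for intersecting subspaces of dimensions $d_1,d_2$, which, combined with the double amortization $\sum_{\tuple k}d_{\tuple k}\le r$ and $\sum_j d_{i,j}\le r$, already brings the naive method down to $8nr^4$. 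So your closing claim that one \emph{must} use an adapted basis to avoid an extra factor of~$r$ is incorrect; recomputing each intersection from scratch is fine once the cost of each intersection is charged against $d_{\tuple k}+d_{i,j}$ rather than a flat $r^3$. Finally, your explicit constant~$8$ is asserted rather than derived, and since you are implementing a different intersection routine than the paper, it would need to be re-verified for your method; the asymptotics are sound but the concrete constant is a (minor) gap.
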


\begin{proof}
  Correctness is obvious by line~\ref{alg:1:6} and the fact that whenever $\tuple k=(k_1,\dots,k_n)$ is
  such that $W_{\tuple k}\neq\{0\}$ then we necessarily also have $W_{(k_1,\dots,k_{n-1})}\neq\{0\}$.

  For the complexity, we first show that it is a loop invariant that
  $W_{\tuple k_1}\cap W_{\tuple k_2}=\{0\}$ for any two distinct $\tuple k_1,\tuple k_2\in U$. 
  This is clear for $i=1$ by line~\ref{alg:1:1} and the assumption in the algorithm 
  specification that $W=\bigoplus_{j=1}^\ell W_{1,j}$ is a direct sum. 
  Assume it is true for some~$i$ and consider the situation right before line~\ref{alg:1:8}. 
  At this point, for any two distinct tuples $\tuple k_1,\tuple k_2\in U$ we have 
  $W_{\tuple k_1}\cap W_{\tuple k_2}=\{0\}$
  by induction hypothesis. We have to show that the same is true for any two distinct tuples 
  $\tuple k_1,\tuple k_2\in U_{\mathit{new}}$. By line~\ref{alg:1:7}, any such tuples have the form 
  $\tuple k_1=(\tuple u_1, j_1)$, $\tuple k_2=(\tuple u_2, j_2)$ for some $\tuple u_1,\tuple u_2\in U$ 
  and $j_1,j_2\in\{1,\dots,\ell\}$.
  The tuples $\tuple k_1,\tuple k_2$ are distinct if $\tuple u_1\neq\tuple u_2$ or $j_1\neq j_2$. 
  If $\tuple u_1\neq\tuple u_2$, then by induction hypothesis $W_{\tuple u_1}\cap W_{\tuple u_2}=\{0\}$, 
  and therefore also
  \begin{alignat*}1
     W_{\tuple k_1}\cap W_{\tuple k_2} 
      &= (W_{\tuple u_1}\cap W_{i,j_1})\cap (W_{\tuple u_2}\cap W_{i,j_2})\\
      &= \{0\}\cap W_{i,j_1}\cap W_{i,j_2}=\{0\}.
  \end{alignat*}
  Similarly, if $j_1\neq j_2$, then $W_{i,j_1}\cap W_{i,j_2}=\{0\}$ by the assumption 
  that $W=\bigoplus_{j=1}^\ell W_{i,j}$ is a direct sum. Therefore
  \begin{alignat*}1
     W_{\tuple k_1}\cap W_{\tuple k_2} 
      &= (W_{\tuple u_1}\cap W_{i,j_1})\cap (W_{\tuple u_2}\cap W_{i,j_2})\\
      &= W_{\tuple u_1}\cap W_{\tuple u_2}\cap \{0\}=\{0\}.
  \end{alignat*}
  This completes the proof of the loop invariant $\tuple k_1\neq\tuple k_2\Rightarrow W_{\tuple k_1}\cap W_{\tuple k_2}=\{0\}$. 

  A consequence of this invariant is that $\sum_{\tuple k\in U}\dim W_{\tuple k}\leq r$ in every iteration. 
  Since the sum $W=\bigoplus_{j=1}^\ell W_{i,j}$ is direct, we also have $\sum_{j=1}^\ell \dim W_{i,j}\leq r$
  in every iteration. 
  The intersection of two subspaces of $W$ of dimensions $d_1,d_2$ can be computed using 
  no more than 
  \[
    \min(r,d_1+d_2)^2\max(r,d_1+d_2)
  \]
  operations in~$C$.
  For the total cost of the algorithm we therefore obtain, writing $U_i$ for the set $U$ 
  in the $i$th iteration and $d_{\tuple k}:=\dim W_{\tuple k}$ and $d_{i,j}:=\dim W_{i,j}$,
  \begin{alignat*}1
          &\sum_{i=2}^n\sum_{j=1}^\ell \sum_{\tuple k\in U_i} 
            \underbrace{\min(r,d_{\tuple k}+d_{i,j})^2}_{\leq (d_{\tuple k} + d_{i,j})^2} 
            \underbrace{\max(r,d_{\tuple k}+d_{i,j})}_{\leq 2r}\\
    \leq{}&2r \sum_{i=2}^n\sum_{j=1}^\ell \sum_{\tuple k\in U_i} 
            \bigl(d_{\tuple k}^2 + 2d_{\tuple k}d_{i,j} + d_{i,j}^2\bigr)\\
    \leq{}&2r \sum_{i=2}^n\sum_{j=1}^\ell \bigl(r^2 + 2 r^2 d_{i,j} + r d_{i,j}^2\bigr)\\
    \leq{}&2r \sum_{i=2}^n \bigl(\ell r^2 + 2 r^3 + r^3\bigr)\\
    \leq{}&8nr^4.
  \end{alignat*}
  In the second step, we have used the bounds $\sum_{\tuple k\in U_i} d_{\tuple k}^2\leq r^2$ and
  $|U_i|\leq r$, which follow from $\sum_{\tuple k\in U_i} d_{\tuple k}\leq r$ and Lemma~\ref{lem:1}, respectively. 
  In the third step, we used the bound $\sum_{j=1}^\ell d_{i,j}^2\leq r^2$, which follows from
  $\sum_{j=1}^\ell d_{i,j}\leq r$. 
\end{proof}

If the objective is just to show that the algorithm runs in polynomial time, a
simpler argument applies.  It suffices to observe that all the intersections can
be done with a number of operations which is at most cubic in~$r$, then taking
also into account that we always have $|U|\leq r$ by Lemma~\ref{lem:1}, the
bound $\mathrm{O}(n \ell r^4)=\mathrm{O}(n r^5)$ follows immediately.

\section{Numerical Evaluation at a\hskip0ptplus1fill\break  Reference Point}\label{sec:5}

We now turn to the question of how to construct the morphisms~$\pi_i$. The basic idea is
to choose a reference point~$z_0$ that is an ordinary point of~$P$, and let~$W$ be the 
space of analytic solutions of the equation in a neighborhood of~$z_0$.

\hangindent=-.3\hsize\hangafter=-8
\leavevmode\smash{\rlap{\kern.71\hsize\raisebox{-6.5\baselineskip}{%
\begin{tikzpicture}
  \fill[gray] (.5, .5) circle (1.5mm);
  \fill[black] (.5, .5) circle (1pt) node[anchor=south] {\strut\raisebox{.1ex}{$z_0$}};
  \fill[gray] (-.1,0) -- +(-5:4mm) arc (-5:30:4mm) -- cycle;
  \fill[black] (-.1,0) circle (1pt) node[anchor=east] {$z_1$};
  \draw (-.1,0) -- (.45, .1) -- (.5, .5); 
  \fill[gray] (1,.3) -- +(35:3mm) arc (35:85:3mm) -- cycle;
  \fill[black] (1,.3) circle (1pt) node[anchor=north] {$z_2$};
  \draw (1,.3) -- (1.1, .5) -- (1, .75) -- (.5, .5); 
  \fill[gray] (0,.9) -- +(105:5mm) arc (105:125:5mm) -- cycle;
  \fill[black] (0,.9) circle (1pt) node[anchor=east] {$z_3$};
  \draw (0,.9) -- (-.15,1.2) -- (.1, 1.1) -- (.3, .5) -- (.5, .5); 
  \fill[gray] (.5,-.6) -- +(205:4mm) arc (205:255:4mm) -- cycle;
  \fill[black] (.5,-.6) circle (1pt) node[anchor=west] {$z_4$};
  \draw (.5,-.6) -- (.3,-.8) -- (.1, -.3) -- (.8,0) -- (.5, .5); 
\end{tikzpicture}%
}}}%
For each singular point~$z_i$, let $\Delta_i$ be a sector rooted at $z_i$ for
which all formal power series appearing in the generalized series solutions of
$P$ at $z_i$ admit an interpretation as analytic functions via some operator
$\mathcal S_{\tuple k, \tuple d}$ (depending on~$i$, but not on the
series), as described in Section~\ref{sec:analytic}.
Such sectors exist and can be computed explicitly.
Next, let $\gamma_i$ ($i=1,\dots,n$) be
polygonal paths from $z_i$ to $z_0$ avoiding singular points and leaving the
startpoint through $\Delta_i$ (meaning that for some $\varepsilon>0$ all the points on
$\gamma_i$ with a distance to $z_i$ less than $\varepsilon$ should belong
to~$\Delta_i$). Such paths exist. The analytic interpretations of the generalized series solutions
at the singular points~$z_i$ defined in $\Delta_i$ admit a unique analytic continuation
along the paths $\gamma_i$ to the neighborhood of~$z_0$.

We define $\pi_i\colon V_i \to W$ as follows. Let $V_{i,j}^0$ be the subspace of $V_{i,j}$ consisting of generalized series~\eqref{eq:1} with $s=1$ and $m=0$, and let $V'_{i,j}$ be a linear complement of~$V_{i,j}^0$ in~$V_{i,j}$.
If $\hat y\in V_{i,j}^0$ i.e., if $\hat y=\Exp(e_{i,j})b$ with $e_{i,j}\in\set
C[\tilde x_i^{-1}]$ and $b\in\set C[[\tilde x_i]]$, define $\pi_i(\hat y)$ to
be the unique analytic continuation of the function
$\operatorname{E}(e_{i,j})\mathcal{S}_{\tuple k,\tuple d}(b)$ along $\gamma_i$ to~$z_0$,
where $\operatorname{E}(e_{i,j})$ refers to the function $z\mapsto\exp(\int_{z_0}^z
e_{i,j}/\tilde x_i)$ with some arbitrary but fixed choice of the branch of the
logarithm, and $\mathcal{S}_{\tuple k,\tuple d}$ is as described in
Section~\ref{sec:analytic}.
Set $\pi_i(\hat y)=0$ for $\hat y \in V'_{i,j}$, and then extend $\pi_i$
to~$V_i$ by linearity.
The precise values of $\pi_i(V_{i,j})$ depend on the choice of $\Delta_i$ and
$\tuple d$ (which is arbitrary, within the limits indicated in
Section~\ref{sec:analytic}), but, as shown below, the properties of these spaces
used in the algorithm do not.

\begin{prop}
  The functions $\pi_i$ defined above satisfy the two requirements imposed in Section~\ref{sec:combination}:
  (1)~$\pi_i(V_{i,1})+\cdots+\pi_i(V_{i,\ell_i})$ is a direct sum; 
  (2)~if $h$~is a hyperexponential term, then $\pi_i^{-1}(h)$ contains the
  formal series expansion of $h$ at $z_i$,
  possibly up to a multiplicative constant.
\end{prop}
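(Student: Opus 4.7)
The plan is to tackle~(2) first, since it reduces to a direct calculation on the local expansion of a hyperexponential, and then~(1), whose difficulty lies in ruling out cancellations between solutions attached to different exponential parts. In both parts I exploit that $\pi_i$ kills the complement $V'_{i,j}$ by construction, so it suffices throughout to work with series of the restricted form $\Exp(e_{i,j})\,b$ in~$V^0_{i,j}$.

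For~(2), write $h=\exp(\int v)$ with $v=D(h)/h\in C(x)$. Expanding $v$ as a Laurent series at $z_i$ and integrating termwise yields $\int v = P_i(\tilde x_i^{-1})+\alpha_i\log\tilde x_i+F_i(\tilde x_i)$, where $P_i\in C[\tilde x_i^{-1}]$ has no constant term, $\alpha_i\in C$, and (crucially, because $v$ is rational) $F_i$ is a \emph{convergent} power series in $\tilde x_i$. The branch of $h$ obtained by continuing from $z_0$ backwards along $\gamma_i^{-1}$ into $\Delta_i$ is therefore $c_0\cdot\operatorname{E}(e_{i,j(i)})\exp(F_i)$ on $\Delta_i$, for some nonzero $c_0\in C$ absorbing the branch ambiguities, where $j(i)$ is the index such that $\Exp(e_{i,j(i)})$ matches $P_i(\tilde x_i^{-1})+\alpha_i\log\tilde x_i$. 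The corresponding formal expansion is $\hat h=c_0\,\Exp(e_{i,j(i)})\,b$ with $b:=\exp(F_i)$. Since $b$ is convergent, the properties of $\mathcal{S}_{\tuple k,\tuple d}$ recalled in Section~\ref{sec:analytic} give $\mathcal{S}_{\tuple k,\tuple d}(b)=b$ as an analytic function, so $\operatorname{E}(e_{i,j(i)})\,\mathcal{S}_{\tuple k,\tuple d}(b)$ agrees with $h/c_0$ on $\Delta_i$. Analytic continuation along $\gamma_i$ preserves this equality, whence $\pi_i(\hat h)$ is a nonzero scalar multiple of $h$ in $W$, which is exactly what (2) requires.

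For~(1), I would assume $\sum_{j=1}^{\ell_i} w_j=0$ with $w_j=\pi_i(\hat y_j)\in\pi_i(V_{i,j})$, and reduce (using that $\pi_i$ vanishes on $V'_{i,j}$) to $\hat y_j=\Exp(e_{i,j})\,b_j\in V^0_{i,j}$. Continuing the vanishing identity backwards along~$\gamma_i$ into $\Delta_i$ yields $\sum_j \operatorname{E}(e_{i,j})\,\mathcal{S}_{\tuple k,\tuple d}(b_j)=0$ as analytic functions there. Each summand admits $\Exp(e_{i,j})\,b_j$ as its generalized asymptotic expansion at $z_i$, and the $\Exp(e_{i,j})$ represent pairwise distinct exponential parts by construction. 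Uniqueness of generalized asymptotic expansions in the multisummation framework recalled in Section~\ref{sec:analytic} (cf.\ \cite[Chap.~10--11]{Balser2000}) then forces each summand to vanish individually on $\Delta_i$. Since $\operatorname{E}(e_{i,j})$ has no zeros there, this gives $\mathcal{S}_{\tuple k,\tuple d}(b_j)=0$, and the injectivity of $\mathcal{S}_{\tuple k,\tuple d}$ on its domain (a function in its image determines its asymptotic expansion) yields $b_j=0$ and hence $w_j=0$.

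The hard part will be the uniqueness step invoked in~(1): one must argue that finitely many analytic functions on $\Delta_i$, each tagged by a distinct formal exponential part, can only cancel if they vanish individually. If a direct citation to the resummation framework feels insufficient, the concrete fallback is induction on the number of nonzero summands. At each step, pick a direction inside $\Delta_i$ along which some exponential $\operatorname{E}(e_{i,j_0})$ strictly dominates all the others---such a direction exists because the differences $e_{i,j}-e_{i,j_0}$ are nonzero polynomials in $\tilde x_i^{-1/s}$---divide the sum by $\operatorname{E}(e_{i,j_0})$ and an appropriate power of $\tilde x_i$, and take $|\tilde x_i|\to 0$ along rays in that direction. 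The dominated terms vanish in the limit and force the leading coefficient of $b_{j_0}$ to be zero; stripping off successive coefficients completes the argument. The remaining ingredients---convergence of $b$ in~(2) and injectivity of $\mathcal{S}_{\tuple k,\tuple d}$---are immediate from Section~\ref{sec:analytic}.
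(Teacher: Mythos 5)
Your treatment of part~(2) is correct and in fact more explicit than the paper's: you compute the local expansion of a rational $v$ directly, observe that the residual power series part $F_i$ (hence $b=\exp(F_i)$) converges, and use the fact that $\mathcal{S}_{\tuple k,\tuple d}$ acts as ordinary summation on convergent series. The paper instead argues abstractly that $\pi_i$ is a differential homomorphism, so $\pi_i(\hat h)$ satisfies the same first-order equation as $h$; both routes are valid, and yours makes the multiplicative constant and the role of convergence more visible.

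Part~(1) has a genuine gap. Your primary step invokes ``uniqueness of generalized asymptotic expansions'' to conclude that a vanishing sum of terms tagged by distinct exponential parts must vanish term by term, but the cited facts about $\mathcal{S}_{\tuple k,\tuple d}$ concern a single formal power series, not a linear combination of functions with different $\Exp(e_{i,j})$ prefactors; that stronger statement is precisely what needs to be proved here, so it cannot be cited as a black box. Your fallback is on the right track (it is the first move of the paper's proof: choose a direction separating growth rates), but it breaks down in a case the paper handles explicitly. The exponential parts $\Exp(e_{i,j})$ contributing to $\pi_i$ have the form $x^{\alpha_j}\exp(u_j)$ with $u_j\in C[\tilde x_i^{-1}]$, and two inequivalent parts may share the same polynomial $u_j$ and the same $\Re\alpha_j$ while having distinct \emph{imaginary parts} of~$\alpha_j$ (e.g.\ $\alpha_1-\alpha_2$ purely imaginary and non-integer). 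Along every ray into $z_i$ these terms have identical modulus, so no direction makes one strictly dominate the other, and your limit argument cannot isolate them. The paper resolves this by grouping the maximal-growth terms, extracting the quantity $\sum_{j=1}^t c_j b_j(0)\,z^{\mathrm i\Im\alpha_j}+\mathrm o(1)$, rescaling $z\mapsto\lambda z$ with $\lambda=\e^p$ for $p=1,\dots,t$, and using the nonvanishing of a Vandermonde-type determinant $\det\bigl((\e^{\mathrm i\Im\alpha_q})^p\bigr)_{p,q}$ to conclude that all $b_j(0)=0$. Without an analogue of that linear-independence step, your induction can get stuck with several surviving terms of the same magnitude.
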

\begin{proof}
  1.
  Without loss of generality, we assume $z_i=0$.
  Let $\hat y_j\in V_{i,j}$ ($j=1,\dots,\ell_i$) and
  consider $\hat y=\sum_{j=1}^{\ell_i}\hat y_j$.
  Write
  $\hat y_j=x^{\alpha_j}\exp(u_j)b_j + \hat y_j'$
  where $\hat y_j' \in V_{i,j}'$,
  the $(\alpha_j,u_j)$ are pairwise distinct,
  $u_j(0)=0$,
  and $b_j(0)\neq0$ unless the series $b_j$ is zero.
  Writing $u_j=\sum_k u_{j,k}x^{-k}$, choose a direction $\theta$ such that
  $\rho \mathrm e^{\mathrm{i}\theta} \in \Delta_i$ for small $\rho$ and 
  any two unequal $u_{j, k}^{1/k} \mathrm e^{\mathrm{i}\theta}$ have different
  real parts.
  
  By changing $x$ to $\mathrm e^{-\mathrm{i}\theta} x$,
  we can assume that $d = 0$.
  This tranforms $u_j$ into
  $\sum_k (u_{j,k} \mathrm e^{\mathrm i k \theta}) x^{-k}$,
  so that the real parts of two polynomials $u_j$ can be the same only if
  the~$u_j$ themselves are equal.
  Hence, we can reorder the nonzero terms in the expression of~$\hat y$ by
  asymptotic growth rate, in such a way that the nonzero terms come first,
  $u_1 = \cdots = u_t$
  and
  $\Re  \alpha_1 = \cdots = \Re  \alpha_t$,
  while
  \[ z^{\Re  \alpha_1} \mathrm e^{\Re  u_1(z)} \gg z^{\Re  \alpha_p}
  \mathrm e^{\Re  u_p(z)}, \quad z \to 0, z > 0 \]
  for all $p \geq t+1$ such that $y_p \neq 0$.
  Using the definition of $\pi_{i}$ and the fact that
  $\mathcal{S}_{\tuple k,\tuple d}(b_j)(z)$
  tends to $b_j(0)$ as $z\to0$ in the positive reals, it follows that 
  \[
    z^{-{\Re  \alpha_1}}\exp(- u_1(z)) \, \pi_i(\hat y)(z) 
    = \sum_{j = 1}^t c_j b_j(0) z^{\mathrm{i}\Im \alpha_j} + \mathrm{o}(1)
  \]
  (as $z \to 0$, $z >0$) for some nonzero constants~$c_j$.
  Since the $(\alpha_j,u_j)$ are pairwise distinct by assumption and the $(\Re \alpha_j,u_j)$
  are equal for $j=1,\dots,t$, the $\Im \alpha_j$ are pairwise
  distinct for $j=1,\dots,t$.

  Now assume that $\pi_i(\hat y)=0$. 
  Then, for all $\lambda > 0$, the expression
  $\sum_{j = 1}^t c_j b_j(0) (\lambda z)^{\mathrm{i}\Im \alpha_j}$
  tends to~$0$ as $z \to 0$, $z>0$.
  Choosing $\lambda=\e^p$ for $p=1,\dots,t$, it follows that if
  not all the $b_j(0)$ were zero, the $t \times t$ determinant
  \[
    \det\bigl((\mathrm e^p z)^{\mathrm{i}\Im \alpha_{q}}\bigr)
    _{p, q}
    = z^{\mathrm i \Im (\alpha_1 + \dots + \alpha_t)}
    \det\bigl((\mathrm e^{\mathrm{i}\Im \alpha_{q}})^p\bigr)
    _{p, q}
  \]
  would tend to zero as well, which however is not the case. 
  Therefore $b_j(0)=0$ for $j=1,\dots,t$, and therefore
  $\hat y_j=0$ for $j=1,\dots,t$, and therefore $\hat y_j=0$ for $j=1,\dots,\ell_i$. 

2. Let $h\in W$ be hyperexponential. Then the expansion $\hat h$ of $h$ at $z_i$
is clearly a local solution, so $\hat h\in V_{i,j}$ for some~$j$. We show that
$\pi_i(\hat h)=ch$ for some $c\in\set C$.  The map $\pi_i$ is a differential
homomorphism because $\mathcal{S}_{\tuple k,\tuple d}$ is (as remarked in
Section~\ref{sec:analytic}) and the (formal) exponential parts $\Exp(e_{i,j})$ are mapped
to analytic functions satisfying the same differential equations.
Since $h$ is hyperexponential, it satisfies a first order linear differential equation.
Since $\hat h$ is the expansion of~$h$, it satisfies the same equations as~$h$.
Since $\pi_i$ is a differential homomorphism, $\pi_i(\hat h)$ satisfies the same equations as~$\hat h$.
Hence $\pi_i(\hat h)$ and $h$ satisfy the same first-order differential equation.
The claim follows.
\end{proof}

The definition of the maps $\pi_1,\dots,\pi_m$ as outlined above relies on analytic 
continuation, a concept which is only available if $C=\set C$. For actual computations,
we must work in a computable coefficient domain. At this point, we use numerical approximations.
By van der Hoeven's result quoted in Section~\ref{sec:analytic}, we are
able to compute for every given $\hat y\in V_{i,j}$ and every given $\varepsilon>0$
a vector $Y_\varepsilon\in\set Q(\mathrm{i})^r$ with 
\[
  \bigl\Vert\bigl(D^k\pi_i(\hat y)(z_0)\bigr)_{k=0}^{r-1}-Y_\varepsilon\bigr\Vert_\infty<\varepsilon. 
\]
Using these approximations, the linear algebra parts of Algorithm~\ref{alg:1} are then
performed with ball arithmetic to keep track of accumulating errors during the
calculations.  The test in line~\ref{step:intersect} of this algorithm requires
to check whether a certain matrix has full rank.  There are two possible outcomes:
If during the Gaussian elimination we can find in every iteration an
entry which is definitely different from zero, then the rank of the matrix is
definitely maximal and the intersection of the vector spaces is definitely
empty. We are then entitled to discard the possible extension of the partial
tuple under consideration. On the other hand, if during the Gaussian elimination
we encounter a column in which all the entries are balls that contain zero, this
can either mean that the intersection is really nonempty, or that the accuracy
of the approximation was insufficient. In this case, in order to be on the safe
side, we must consider the intersection as nonempty and include the
corresponding tuple.

Regardless of which initial accuracy $\varepsilon$ is used, this variant of
Algorithm~\ref{alg:1} produces a set of tuples that is guaranteed to contain all
correct ones, but may possibly contain additional ones. With sufficiently high precision,
the number of tuples in the output that actually have an empty intersection will drop to zero.
We don't need to know in advance which precision is sufficient in this sense, because
it is not dramatic to have some extra tuples in the output as long as they are not
too many. As a pragmatic strategy balancing precision and output size, one might
start the algorithm with some fixed precision~$\varepsilon$ and let it abort and
restart with doubled precision whenever $|U|$ exceeds~$2r$, say. 

Observe that the numerical approximation is only used to determine the tuple set~$U$,
and we do not use it to somehow reconstruct the exact symbolic hyperexponential solutions
from it. We therefore don't expect to need very high precision in typical situations. 

\section{A Detailed Example}

Consider the operator
\begin{alignat*}1
  P=p_0+p_1D+p_2D^2+p_3D^3\in\set Q[x][D]
\end{alignat*}
where
\begin{center}\scriptsize
 $p_0=-105 x^{20}+3570 x^{19}-58026 x^{18}+556216
    x^{17}-3456830 x^{16}+14810744 x^{15}-45667732 x^{14}+104614932 x^{13}-182764261 x^{12}+249940430 x^{11}-276371642
    x^{10}+257839924 x^9-211785148 x^8+154714472 x^7-95675216 x^6+45214304 x^5-13863936 x^4+1685888 x^3+424960 x^2-182784
    x+20480$, \\[5pt]
 $p_1=(x-1) x (105 x^{19}-3150 x^{18}+51456 x^{17}-489796 x^{16}+2938210 x^{15}-11903624 x^{14}+34247824
    x^{13}-72603516 x^{12}+116974957 x^{11}-148046826 x^{10}+153582952 x^9-137261696 x^8+109046080 x^7-75250624
    x^6+41559168 x^5-16084864 x^4+3278080 x^3+163840 x^2-231424 x+32768)$, \\[5pt]
 $p_2=-4 (x-2)^2 (x-1)^3 x^2 (30 x^{15}-693 x^{14}+7314 x^{13}-42905 x^{12}+155930 x^{11}-378483
    x^{10}+649718 x^9-828795 x^8+820160 x^7-645092 x^6+398200 x^5-182384 x^4+54656 x^3-5696 x^2-2944 x+1024)$, \\[5pt]
 $p_3=4 (x-2)^4 (x-1)^5 x^4 (15 x^{10}-258 x^9+1492 x^8-4446 x^7+8309 x^6-10972 x^5+10520 x^4-6456
    x^3+1552 x^2+480 x-256)$. 
\end{center}
The leading coefficient $p_3$ has 13 distinct roots in~$\set C$, but those coming from the degree-10-factor turn out
to be apparent, so we can ignore them. It thus remains to study the singular points $z_1:=0$, $z_2:=1$, $z_3:=2$,
and $z_4:=\infty$. 

For each singular point, we find three linearly independent generalized series solutions with two distinct
exponential parts:
\begin{alignat*}3
  V_{1,1}&=\set C\hat y_{1,1}\qquad& V_{1,2}&=\set C\hat y_{1,2}+\set C\hat y_{1,3},\\
  V_{2,1}&=\set C\hat y_{2,1}\qquad& V_{2,2}&=\set C\hat y_{2,2}+\set C\hat y_{2,3},\\
  V_{3,1}&=\set C\hat y_{3,1}\qquad& V_{3,2}&=\set C\hat y_{3,2}+\set C\hat y_{3,3},\\
  V_{4,1}&=\set C\hat y_{4,1}\qquad& V_{4,2}&=\set C\hat y_{4,2}+\set C\hat y_{4,3}
\end{alignat*}
where
\allowdisplaybreaks
\begin{alignat*}1
  \hat y_{1,1} &
                = \exp(\tfrac1x)\Bigl(1 -\tfrac49x+\tfrac{37}{32}x^2+\tfrac{83}{384}x^3+\cdots\Bigr),\\ 
  \hat y_{1,2} &
                = \sqrt{x}\Bigl(1 - x - \tfrac{25}{24}x^3+\cdots \Bigr), \\ 
  \hat y_{1,3} &
                = \sqrt{x}\Bigl(x^2 - \tfrac74x^3+\tfrac{9}{32}x^4+\cdots\Bigr),\\[3pt] 
  \hat y_{2,1} &
                = (x-1)^3 + (x-1)^5 -\tfrac43(x-1)^6 + \cdots,\vphantom{\Bigl(}\\ 
  \hat y_{2,2} &
                = \exp(\tfrac1{x-1})\Bigl(1 + \tfrac12(x-1) + \tfrac{19}{120}(x-1)^3+\cdots\Bigr),\\
  \hat y_{2,3} &
                = \exp(\tfrac1{x-1})\Bigl((x-1)^2 + \tfrac{23}{30}(x-1)^3 +\cdots\Bigr),\\[3pt]
  \hat y_{3,1} &
                = 1 - \tfrac34(x-2) + \tfrac{39}{32}(x-2)^2 - \tfrac{673}{384}(x-2)^3 + \cdots,\vphantom{\Bigl(}\\ 
  \hat y_{3,2} &
                = \tfrac1{(x-2)^2}\exp(\tfrac1{x-2})\Bigl(1+\tfrac{11}{4}(x-2)+\cdots\Bigr),\\ 
  \hat y_{3,3} &
                = \tfrac1{(x-2)^2}\exp(\tfrac1{x-2})\Bigl((x-2)^3 + \tfrac14(x-2)^4+\cdots\Bigr),\\[3pt] 
  \hat y_{4,1} &
                = x\bigl(1 + 3x^{-1} + 9 x^{-2} + \tfrac{79}{3} x^{-3} + 74 x^{-4} + \cdots\bigr),\vphantom{\Bigl(}\\ 
  \hat y_{4,2} &
                = \sqrt x\Bigl(1 + x^{-1} + \tfrac32 x^{-2} + \tfrac{13}6x^{-3}+\cdots\Bigr),\\ 
  \hat y_{4,2} &
                = \sqrt x\Bigl(x^3 + x + \tfrac{19}6 x^{-1} + \tfrac{283}{30}x^{-2}+\cdots\Bigr). 
\end{alignat*}
Let us choose $z_0=3$ as ordinary reference point and take the branch of the logarithm for which 
$\sqrt{x}$ is positive and real on the positive real axis. 
The example was chosen in such a way that all the power series are convergent in some neighborhood
of the expansion point, so that we do not need to worry about sectors and resummation theory but can use the
somewhat simpler algorithm for effective analytic continuation in the ordinary case to compute the values
of the analytic functions $y_{i,j}:=\pi_i(\hat y_{i,j})$ ($i=1,\dots,4$; $j=1,2,3$). 
The vectors $\bigl(y_{i,j}(z_0), Dy_{i,j}(z_0), D^2y_{i,j}(z_0)\bigr)$ to five decimal digits of accuracy are as follows. 
\begin{alignat*}3
   W_{1,1}&{=}\Bigl[\begin{pmatrix} -200.15 \\ 322.46 \\ -1184.8 \end{pmatrix}\Bigr],\
 & W_{1,2}&{=}\Bigl[\begin{pmatrix} -70.513 \\ -46.308 \\ -101.17 \end{pmatrix}\!,
                  \begin{pmatrix} -156.55 \\ -91.322 \\ -205.47 \end{pmatrix}\Bigr],\\[5pt]
   W_{2,1}&{=}\Bigl[\begin{pmatrix} 30.349 \\ -48.896 \\ 179.66 \end{pmatrix}\Bigr],\
 & W_{2,2}&{=}\Bigl[\begin{pmatrix} 12.494 \\ 5.2891 \\  13.066 \end{pmatrix}\!,
                  \begin{pmatrix} 77.105 \\ 44.216 \\ 99.931 \end{pmatrix}\Bigr],\\[5pt]
   W_{3,1}&{=}\Bigl[\begin{pmatrix} .74285 \\ -.061904 \\ .14960 \end{pmatrix}\Bigr],\
 & W_{3,2}&{=}\Bigl[\begin{pmatrix} 15.580 \\ -31.307 \\ 105.26 \end{pmatrix}\!,
                  \begin{pmatrix} 4.5433 \\ 2.6503 \\ 5.9631 \end{pmatrix}\Bigr],\\[5pt]
   W_{4,1}&{=}\Bigl[\begin{pmatrix} 30.349 \\ -48.896 \\ 179.66 \end{pmatrix}\Bigr],\
 & W_{4,2}&{=}\Bigl[\begin{pmatrix} 2.8557 \\ -.23797 \\ .57510 \end{pmatrix}\!,
                  \begin{pmatrix} 63.199 \\ 41.308 \\ 90.353 \end{pmatrix}\Bigr].
\end{alignat*}
We now go through Algorithm~\ref{alg:1}. Start with the partial tuples $(1)$ and~$(2)$ corresponding
to the vector spaces $W_{1,1}$ and $W_{1,2}$, respectively. 
To compute the intersection of $W_{1,1}$ and $W_{2,1}$ we apply Gaussian elimination to the $3\times 2$-matrix
whose columns are the generators of $W_{1,1}$ and $W_{2,1}$:
\begin{alignat*}1
  \begin{pmatrix}
    -200.15 & 30.349 \\ 
     322.46 & -48.896 \\
    -1184.8 & 179.66
  \end{pmatrix} \longrightarrow
  \begin{pmatrix}
    -200.15 & 30.349 \\
            & 0.00 \\
            & 0.00     
  \end{pmatrix}
\end{alignat*}
The notation $0.00$ refers to some complex number $z$ with $|z|<5\cdot 10^{-3}$, which may or may not be zero, 
while the blank entries in the left column signify exact zeros that have been produced by the elimination.
As the remaining submatrix does not contain any entry which is certainly nonzero, we regard the intersection 
as nonempty, which in this case means $W_{1,1}=W_{2,1}$. The partial tuple $(1)$ is extended to $(1,1)$. 

The intersections $W_{1,1}\cap W_{2,2}$ and $W_{1,2}\cap W_{2,1}$ turn out to be trivial, as they have to
be if we really have $W_{1,1}=W_{1,2}$, because the sums $W_{1,1}\oplus W_{1,2}$ and $W_{2,1}\oplus W_{2,2}$
are direct. It thus remains to consider the intersection $W_{1,2}\cap W_{2,2}$. 
Applying Gaussian elimination to the $3\times 4$-matrix whose columns are the generators of $W_{1,2}$ and $W_{2,2}$,
we find
\begin{alignat*}1
  &\begin{pmatrix}
    -70.513 & -25.596 & 12.494 & 77.105 \\
    -46.308 & 2.1330 & 5.2891 & 44.216 \\ 
    -101.17 & -5.1548 & 13.066 & 99.931
  \end{pmatrix}\\
  \longrightarrow{}&
  \begin{pmatrix}
    -70.513 & -25.596 & 12.494 & 77.105 \\
            & -17.50 & 4.440 & 9.777 \\
            &       & 0.00 & 0.00 
  \end{pmatrix},
\end{alignat*}
which suggests that we have $W_{1,2}=W_{2,2}$. We extend the partial tuple $(2)$ to $(2,2)$. 
At the end of the first iteration, we have $U=\{(1,1),(2,2)\}$.

In the second iteration, we find $W_{(1,1)}\cap W_{3,1}=\{0\}$ and $W_{(1,1)}\subseteq W_{3,2}$,
so we extend the partial tuple $(1,1)$ to $(1,1,2)$ and record $W_{(1,1,2)}=W_{(1,1)}=W_{1,1}$. 
Furthermore we find $W_{3,1}\subseteq W_{(2,2)}$, so we extend $(2,2)$ to $(2,2,1)$ and record
$W_{(2,2,1)}=W_{3,1}$. Finally, there is a nontrivial intersection between $W_{(2,2)}$ and $W_{3,2}$:
\begin{alignat*}1
  &
  \begin{pmatrix}
                    12.494 & 77.105 & 15.580 & 4.5433 \\
                    5.2891 & 44.216 & -31.307 & 2.6503 \\
                    13.066 & 99.931 & 105.26 & 5.9631 
  \end{pmatrix}\\
  \longrightarrow{}&
  \begin{pmatrix}
    12.494 & 77.105 & 15.580 & 4.5433 \\
           & -27.34 & 89.53 & -1.72 \\
           &        & 216. & 0.00
  \end{pmatrix}
\end{alignat*}
suggests a common subspace of dimension 1 generated by the second listed generator of~$W_{3,2}$. 
We therefore extend the partial tuple $(2,2)$ to $(2,2,2)$ and record $W_{(2,2,2)}=[ (4.5433, 2.6503, 5.9631) ]$.
At the end of the second iteration, we have $U=\{(1,1,2),(2,2,1),(2,2,2)\}$. 

For the final iteration, we see by inspection that $W_{4,1}=W_{2,1}=W_{(1,1,2)}$, so we extend $(1,1,2)$
to~$(1,1,2,1)$. Because $\dim W_{4,1}=1$ and the sums of the vector spaces are direct, the other two partial 
tuples cannot also have a nontrivial intersection with~$W_{4,1}$, nor can $W_{(1,1,2)}\cap W_{4,2}$ be nontrivial. 
We do however have $W_{(2,2,1)}\subseteq W_{4,2}$ and
$W_{(2,2,2)}\subseteq W_{4,2}$, so the algorithm terminates with the output $U=\{(1,1,2,1),(2,2,1,2),(2,2,2,2)\}$. 

At this point we know that every hyperexponential solution of the operator $P$ must have one of the following three
exponential parts:
\begin{alignat*}1
  \frac1{(x-2)^2}\exp\Bigl(\frac1x+\frac1{x-2}\Bigr) &\qquad\text{from (1,1,2,1)}\\
  \sqrt{x}\exp\Bigl(\frac1{x-1}\Bigr)                &\qquad\text{from (2,2,1,2)}\\
  \sqrt{x}\exp\Bigl(\frac1{x-1}+\frac1{x-2}\Bigr)    &\qquad\text{from (2,2,2,2).}
\end{alignat*}
Following the steps of Algorithm~\ref{alg:main}, it remains to check whether
some rational function multiples of these terms are solutions of~$P$. The
important point is that we have to do this only for three different candidates,
while the naive algorithm would have to go through all $2^4=16$ combinations.
Indeed, it turns out that $P$ has the following three hyperexponential
solutions:
\begin{alignat*}1
  &\frac{(x-1)^3}{(x-2)^2}\exp\Bigl(\frac1x+\frac1{x-2}\Bigr),\quad
  \sqrt{x}\exp\Bigl(\frac1{x-1}\Bigr),\\
  &(x-2)x^2 \sqrt{x}\exp\Bigl(\frac1{x-1}+\frac1{x-2}\Bigr).
\end{alignat*}

\section{Concluding Remarks}

Our algorithm as described above takes advantage of the fact that series
expansions of hyperexponential terms cannot involve exponential terms with
ramification ($s>1$) or logarithms ($m>0$), by letting the morphisms~$\pi_i$ map
all these irrelevant series solutions to zero. As a result, we get smaller
vector spaces~$W_{i,j}$, which not only reduces the expected computation time
per vector space intersection but also makes it somehow more likely for
intersections to be empty, thus decreasing the chances of getting tuples that
do not correspond to hyperexponential solutions. 

As a further refinement in this direction, it would be desirable to exploit the
fact that if $\hat h=\Exp(e)b$ is the expansion of some hyperexponential
term~$h$, then the formal power series~$b$ must be convergent in some
neighborhood of the expansion point. Instead of the vector spaces $W_{i,j}$
used above, it would be sufficient to consider the subspaces
$W_{i,j}'\subseteq W_{i,j}$ corresponding to generalized series solutions
involving only convergent power series. Besides the advantage of having to work
with even smaller vector spaces, an additional advantage would be that the
numerical evaluation becomes simpler because algorithms for the regular
case~\cite{ChudnovskyChudnovsky1990,vdHoeven1999} become
applicable. Implementations of these algorithms are available~\cite{mezzarobba10}, which
to our knowledge is not yet the case for van der Hoeven's general algorithm for
the divergent case ~\cite{hoeven07}. Unfortunately however, it is not obvious how to compute from a
given basis of $W_{i,j}$ a basis of the subspace~$W_{i,j}'$.  Miller's
algorithm~\cite{wimp84} numerically solves a similar problem, but so far we have
not been able to turn the underlying convergence statements into explicit error
bounds that would yield an algorithm producing output with certified precision.

Finally, it would of course be also interesting to see an analog of our
algorithm for finding hypergeometric solutions of linear recurrence equations
with polynomial coefficients. A translation is not immediate because there is no
notion of local solution around a finite singularity in this case.

\bibliographystyle{plain}
\bibliography{bib}

\end{document}